\documentclass[a4paper,UKenglish, numberwithinsect]{lipics-v2016-mine}
 
\usepackage{microtype}


\bibliographystyle{plainurl}

\title{Existence of k-ary Trees: Subtree Sizes, Heights and Depths}
\titlerunning{Existence of k-ary Trees} 

\author[1]{Akshar Varma}
\affil[1]{Dhirubhai Ambani Institute of Information and Communication
  Technology, Gandhinagar, India\\
  \texttt{akshar\_varma@daiict.ac.in}}
\authorrunning{Akshar Varma} 

\Copyright{Akshar Varma}

\subjclass{F.2.2 [Analysis of Algorithms and Problem Complexity] Nonnumerical Algorithms and Problems---Computations on discrete structures; G.2.2 [Discrete Mathematics] Graph Theory---Trees \\
General Terms: Theory, Algorithms}
\keywords{Existence of k-ary Trees, Strong NP-Completeness, Subtree
  Sizes, Height, Depth}


\newcommand{\nmts}{\textsc{NMTS}}
\newcommand{\nmtsk}{\textsc{NMTS-K}}
\newcommand{\kpwt}{\textsc{K-PwT}}

\newcommand{\sssk}{$\R(\I_{\S})$}

\newcommand{\npc}{NP-Complete}
\newcommand{\snpc}{Strongly \npc{}}
\newcommand{\pptrans}{pseudo-polynomial transformation}

\newcommand{\I}{\text{I}}
\newcommand{\R}{\text{E}}
\newcommand{\T}{\text{T}}
\renewcommand{\S}{\text{S}}
\newcommand{\D}{\text{D}}
\renewcommand{\H}{\text{H}}
\newcommand{\ITR}{\text{ITR}}

\newcommand{\Max}[1]{\textsc{Max[\textnormal{#1}]}}
\newcommand{\Length}[1]{\textsc{Length[\textnormal{#1}]}}
\newcommand{\f}{\textit{f}}

\newtheorem{problem}[theorem]{\textbf{PROBLEM}}
\newtheorem{myremark}[theorem]{Remark}

\newcommand{\psc}{\,}

\begin{document}

\maketitle

\begin{abstract}
  The rooted tree is an important data structure, and the subtree
  size, height, and depth are naturally defined attributes of every
  node. We consider the problem of the existence of a k-ary tree given
  a list of attribute sequences. We give polynomial time
  ($O(n\log(n))$) algorithms for the existence of a k-ary tree given
  depth and/or height sequences. Our most significant results are the
  Strong NP-Completeness of the decision problems of existence of
  k-ary trees given subtree size sequences. We prove this by
  multi-stage reductions from \textsc{Numerical Matching with Target
    Sums}. In the process, we also prove a generalized version of the
  \textsc{3-Partition} problem to be Strongly NP-Complete. By looking
  at problems where a combination of attribute sequences are given, we
  are able to draw the boundary between easy and hard problems related
  to existence of trees given attribute sequences and enhance our
  understanding of where the difficulty lies in such problems.
 \end{abstract}

\section{Introduction}
Rooted trees are important data structures that are encountered
extensively in Computer Science, especially in the form of
self-balancing binary trees. Attributes of nodes, like its subtree
size, height,
and depth are invariant under isomorphism of rooted trees
and are ubiquitous in the study of data structures and
algorithms. Heights are used in self-balancing trees (AVL~\cite{AVL}
or Red-Black~\cite{Red-Black}), depths in analyzing complexity of
computation trees, recursion trees and decision trees and subtree
sizes for finding order statistics in dynamic data
sets~\cite{CLRS}. The subtree size, height and depth of every node
in a rooted tree can be computed in time linear in the number of
nodes.  In this paper, we discuss the computational complexity of the
converse problem -- the existence
(realization) of a rooted k-ary tree given some of these attribute sequences.

The problems that we address are similar in flavor to those studied in
Aigner and Triesch~\cite{graph-invariants} who discuss realizability
and uniqueness of graphs given invariants. The most famous of such
problems is the Erdos-Gallai graph realization problem~\cite{Erdos-Gallai} (a
variant also addresses the realization problem for trees) which asks
whether a given set of natural numbers occur as the degree sequence of
some graph; polynomial time algorithms are known for this
problem~\cite{Havel,Hakimi}. Our problem can be considered a part of
the category of well researched problems of reconstruction of a
combinatorial structure from some form of partial information. Apart
from the Erdos-Gallai theorem we already mentioned, a lot of work has
been done on reconstruction of graphs from subgraphs~\cite{Ulam,
  Kelly, Harary1, Harary2, Harary3, Nash-Williams, Lovasz,
  ManvelGraphs}. Beyond graphs, the problem of
reconstruction of combinatorial structures like
matrices~\cite{ManvelMatrices} and trees~\cite{ManvelTrees} have also
been studied. More recent work has been done on reconstruction of
sequences~\cite{Dudik-Schulman} and on reconstruction of strings from
substrings~\cite{Acharya}. Bartha and Bursci~\cite{bartha-bursci} have
addressed the problem of reconstruction of trees using frequencies of
subtree sizes. While their paper focuses on the reconstruction of
unrooted trees given subtree sizes, we look at the existence of rooted
trees with given attribute sequences.

\sloppy Given a rooted tree \T{}, information \I(\T) about the
attributes of the tree \T{} can be constructed using various
combination of attribute sequences. Given some such information \I, we look at
the existence problem \R(\I), which asks whether there is a k-ary
tree \T{} such that $\I=\I(\T)$. We use the letters \S{}, \H{}, and
\D{} to refer to subtree size, height and depth attributes
respectively. These attributes can be used individually or in
combination as in these examples:

\begin{example}\label{subtree} We
use the notation like $\R(\I_\S)$ for the existence problem given
only, say,
the subtree sizes sequence\footnote{We use the term sequence (borrowed from the
  terminology in the Erdos-Gallai theorem) throughout the paper since
  these attributes are generally computed and used in either
  non-decreasing or non-increasing order. It should be noted that when
  such an order is not imposed, these are multisets. Nonetheless, we
  maintain the use of the term sequence for consistency.}.
For example: Given a sequence of subtree sizes,
  $\I_\S=\{1,2,3,1,1,3,7,1,9\}$,
  does there exist a tree T such that $\I_\S=\I_\S(T)$?\end{example}

\begin{example}\label{setOfTuples}
We use the notation like $\R(\I_{\S,\D})$ to refer
to the existence problem given, say, synchronized subtree sizes and depths (all attributes of a node are
associated with each other as tuples).
For example: Given synchronized information of (subtree size,
  depth), $\I_{\S,\D}=\{(1,5),(2,4),(3,2),(1,3),(1,3),(3,2),(7,1),(1,1),(9,0)\}$, does
  there exist a tree T such that $\I_{\S,\D}=\I_{\S,\D}(T)$?\end{example}%

\begin{example}\label{tupleOfSets} We use the notation like $\R(\I_\S,\I_\D)$ when there is no synchronization and
  just two (or more) sequences are given. For example: Given
  asynchronized information list of subtree sizes and depths,
  $\I_\S=\{1,2,3,1,1,3,7,1,9\}, \I_\D=\{5,4,2,3,3,2,1,1,0\}$,
  does there exist a tree T such that $\I_\S=\I_\S(T)$ and $\I_\D=\I_\D(T)$?\end{example}

Problems containing only height and/or depth sequences are shown to have
$O(n\log(n))$ algorithms for deciding the existence of k-ary
trees. Our most significant results are the proof that all existence
problems containing subtree size sequences are Strongly
NP-Complete for k-ary trees. We prove the Strong NP-Completeness using reductions from
the \textsc{Numerical Matching with Target Sums} problem. The
reduction is performed in multiple stages during which we also prove a
generalized version of the \textsc{3-partition} problem to be Strongly
NP-Complete. We then proceed to provide slightly modified yet similar existence
problems (for example, existence of certain sub-classes of trees)
which are polynomially solvable. We attempt to draw the boundaries
separating the NP-Complete problems from the easy problems, focusing
on how changing the attribute, adding restrictions or providing more
information change the computational complexity.

Section \ref{sect:def} contains the basic definitions, notation and
conventions. The most important results are presented in Section
\ref{sect:proofs}, which contains the proofs for the Strong
NP-Completeness of problems related to subtree sizes. Section
\ref{sect:sssk-positive} continues further discussion on subtree sizes
and contains algorithms for some sub-classes of trees for which the
problem can be solved in polynomial time. This is followed by the
Section \ref{sect:HDstuff}, detailing the analyses of the height and
depth sequences. Section \ref{sect:combined} contains details of
sequences given in combination and some discussion about the
difficulty of these problems. Section \ref{sect:conclusion} has
concluding remarks and possible directions for future work.

\section{Preliminaries}\label{sect:def}
   \begin{definition}[k-ary Tree]\label{def:kary} A rooted tree in which every node can have at
     most K children.\end{definition}

   \begin{definition}[Subtree size]\label{def:subtreeSize} The number of nodes
     in the subtree rooted at a node (including itself) is known as
     the subtree size of that node. The subtree size of a node can
     also be defined recursively as being one greater than the sum of the subtree sizes
     of its children.\end{definition}

   \begin{definition}[Height of a node]\label{def:height} The height
     of a leaf node is zero. The height of every other node is one
     more than the maximum of the heights of its children.\end{definition}

  \begin{definition}[Depth of a node]\label{def:depth} The depth of a
    node is the number of edges in the path from that node to the
    root.\end{definition}

  \begin{definition}[Levels in a tree]\label{def:level} The set
    of nodes at a particular depth forms the level at that
    depth.\end{definition}

 \begin{definition}[Complete Trees]\label{def:complete} $K$-ary trees in which every level except
  possibly the last are filled and all nodes in the last level are
  filled from the left. Given the number of nodes $n$, this is a unique
  tree and is represented as $T^c(n)$.\end{definition}

 \begin{definition}[Full Trees]\label{def:full} $K$-ary trees in which
   every node has 
   exactly $K$ or $0$ children.\end{definition}

 The remaining definition are due to Gary and Johnson~\cite{garyjohn, strongNP}.
 \begin{definition}[\Max{I}] \Max{I} is the magnitude of the
   maximum number present in an instance $I$ of any decision problem. \end{definition}

  \begin{definition}[\Length{I}] \Length{I} is the number of
    symbols required to represent an instance $I$ of any decision problem. \end{definition}

  \begin{definition}[Strongly NP-Completeness]\label{def:snpc}
    For a decision problem $\Pi$, we define $\Pi_p$ to denote the
    subproblem of $\Pi$ obtained by restricting $\Pi$ to only those
    instances that satisfy $\Max{I} \leq p(\Length{I})$, where $p$ is a
    polynomial function. The problem $\Pi$ is said to be NP-Complete
    in the strong sense or Strongly NP-Complete (SNPC) if $\Pi$ belongs to
    NP and there exists $p$ for which
    $\Pi_p$ is NP-complete.
  \end{definition}

  \begin{definition}[Pseudo-polynomial transformation]\label{def:pseudo}
    A pseudo-polynomial transformation from a source problem to a
    target problem is a transformation \f{} from any instance of the source
    problem to an arbitrary instance of the target problem satisfying the following
    conditions:
    \begin{itemize}
    \item $f$ should be computable in a time
      polynomial in \Max{I} and \Length{I}.
    \item $\Length{I} \leq p(\Length{\f(I)})$ for some polynomial $p$.
    \item $\Max{\f(I)} \leq p'(\Max{I}, \Length{I})$ for some polynomial $p'$.
    \end{itemize}
  \end{definition}

  \begin{myremark}\label{remark-for-proof}
    To prove problems to be \snpc{}, one needs to have a \pptrans{} from
  a problem already known to be \snpc{}. In practice this is the same as a
  polynomial transformation used to prove problems NP-Complete,
  with the added condition that the maximum integer in the constructed
  instance needs to be polynomially bounded in the maximum integer in and
  the length of, the instance from which we are making the transformation.
\end{myremark}

\section{Existence of k-ary Trees given Subtree Sizes Sequence: \sssk{}}

Given a tree, finding the subtree sizes of all its nodes is a linear
time problem. One might intuitively expect that the converse problem is
also easy. After all, by definition, all that needs to be assured is
that the sum of the children's subtree sizes is one less than the
parent's subtree size. While intuitively this may seem so, the \sssk{}
problem has been found to be difficult to be solve for binary
trees~\cite{manuByKS}. Top-down, bottom-up and dynamic programming
approaches were tried but all yielded exponential time
algorithms. This difficulty prompted a search for an NP-Completeness
reduction.

We prove the \sssk{} problem \snpc{} in a series of reductions
starting from the \textsc{Numerical Matching with Target Sums} (\nmts{})
problem (\snpc{} by Theorem~\ref{theorem:nmts}) to the \textsc{Numerical
Matching with Target Sums using K-sets} (\nmtsk{}) problem (in
Section~\ref{sect:prove-nmtsk}) to the \textsc{K-Partition with Targets}
(\kpwt{}) problem (in Section~\ref{sect:prove-kpwt}) to finally the
\sssk{} problem (in Section~\ref{sect:prove-sssk}).

\subsection{Proofs of Strong NP-Completeness}\label{sect:proofs}

In this section we prove the Strong NP-Completeness of \sssk{} via a series of reductions.

\begin{theorem}[Due to Garey and Johnson~\cite{garyjohn}]\label{theorem:nmts} The
  \nmts{} problem stated below is strongly NP complete:

  Given disjoint sets $X$ and $Y$ each containing $m$ elements, a size
  function $s:~X\cup Y \mapsto \mathbb{Z^{+}}$, and a target vector
  $B=(b_1,\dots,b_m) \;\in \mathbb{N}^m$ with positive integer
  entries, can $X \cup Y$ be partitioned into $m$ disjoint sets
  $A_1,A_2,\dots,A_m$, each containing exactly one element from each
  of $X$ and $Y$, such that, $\sum_{a \in A_i}s(a) = b_i$, for
  $1 \le i \le m$?
\end{theorem}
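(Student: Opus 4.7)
The plan is two-fold: establish membership in NP, then exhibit a \pptrans{} from a problem already known to be \snpc{}. Membership is immediate --- a certificate is the collection of pairs $A_1,\dots,A_m$, and in polynomial time we can check that each $A_i$ contains exactly one element of $X$ and one of $Y$, that the $A_i$ are disjoint and cover $X\cup Y$, and that $\sum_{a\in A_i}s(a)=b_i$ for every $i$.

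For hardness I would reduce from a classical \snpc{} number problem. A natural source is \textsc{Numerical 3-Dimensional Matching} (\textsc{N3DM}) --- itself proved \snpc{} by Garey and Johnson --- because it already asks for a matching across multiple $m$-element sets with arithmetic targets; it differs from \nmts{} only in that it works with triples rather than pairs and uses a single common target $B$ rather than a target vector. The transformation should compress the three \textsc{N3DM} sets $X,Y,Z$ into two \nmts{} sets by coupling, say, $Y$ and $Z$ through a positional encoding, and exploit the extra freedom in the target vector $(b_1,\dots,b_m)$ to record the residual sum $B$ minus each element of $X$. Tags must be attached to the elements in a way that prevents the arithmetic from accidentally simulating a forbidden pairing across different intended triples.

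The principal obstacle, which distinguishes a \pptrans{} from an ordinary polynomial reduction, is the requirement that $\Max{\f(I)}$ be polynomially bounded in $\Max{I}$ and $\Length{I}$ (Definition~\ref{def:pseudo}); this rules out the familiar NP-completeness trick of multiplying by an exponentially large base $M$ to keep digit positions from interacting. I would therefore use additive tags whose magnitudes are themselves polynomial --- for instance, offsets of order $\Length{I}\cdot\Max{I}$ --- and argue, via elementary inequalities bounding possible partial sums, that the only sum-respecting pairings are those corresponding to genuine \textsc{N3DM} matchings. The forward direction (a valid \textsc{N3DM} matching yields a valid \nmts{} matching) is straightforward by construction, and the backward direction follows once the tag inequalities are shown to force each pair in $A_i$ to come from the $Y$ and $Z$ elements originally intended to accompany the $X$-element associated with $b_i$. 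With such an encoding in hand, the pseudo-polynomial bound on magnitudes yields the \snpc{} conclusion, as required by Remark~\ref{remark-for-proof}.
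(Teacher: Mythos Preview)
The paper does not actually prove this theorem: it is stated with attribution to Garey and Johnson and used as the starting point of the reduction chain, with no argument supplied. So there is no in-paper proof against which to compare your proposal.

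That said, your plan is sound but more elaborate than necessary. Given an \textsc{N3DM} instance $(X,Y,Z,s,B)$ with $X=\{x_1,\dots,x_m\}$, take the \nmts{} instance whose two sets are $Y$ and $Z$, whose size function is $s$ restricted to $Y\cup Z$, and whose targets are $b_i=B-s(x_i)$. A triple $\{x_i,y,z\}$ hitting $B$ corresponds exactly to a pair $A_i=\{y,z\}$ hitting $b_i$, so the instances are equi-satisfiable; and since no number grows, the bound $\Max{\f(I)}\le p'(\Max{I},\Length{I})$ of Definition~\ref{def:pseudo} is immediate. (If some $s(x_i)\ge B$ the \textsc{N3DM} instance is trivially \textsc{No}, so positivity of the $b_i$ is harmless.) The target \emph{vector} already carries the per-index information that the third set supplied in \textsc{N3DM}, so the positional tags and additive offsets you sketch --- and the delicate inequalities needed to justify them --- are unnecessary machinery here.

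One caution worth recording: make sure the \snpc{} status of your source problem is established independently of \nmts{}. In Garey and Johnson's development both \textsc{N3DM} and \nmts{} sit downstream of \textsc{3-Partition}, so if you are writing this up as a self-contained proof you should either cite a derivation of \textsc{N3DM} that does not pass through \nmts{}, or reduce from \textsc{3-Partition} directly.
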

\noindent
\nmts{}~$(X, Y, s, B, m)$ refers to an instance of the \nmts{} problem characterized by the sets $X$ and $Y$, a size
function $s$, the target vector $B$ and the cardinality of the target
vector $m$.

\subsubsection{\nmtsk{} is \snpc}\label{sect:prove-nmtsk}
The \nmtsk{} problem is proved \snpc{} by reduction from \nmts{}.

\begin{problem}[\nmtsk{}]\label{prob:nmtsk}
  Given $K \geq 2$ disjoint sets $X_i$ each containing $m$ elements, a
  size function $s:~\bigcup X_i \mapsto \mathbb{Z^{+}}$, and a target
  vector $B=(b_1,\dots,b_m) \;\in \mathbb{N}^m$ with positive integer
  entries, can $\bigcup X_i$ be partitioned into $m$ disjoint sets
  $A_1,A_2,\dots,A_m$, each containing exactly one element from each
  of $X_i$, such that, $\sum_{a \in A_i}s(a) = b_i$, for
  $1 \le i \le m$?
\end{problem}
\noindent
\nmtsk{}~$(K, X_i, s, B, m)$ is an instance of the \nmtsk{} problem characterized by the integer $K$, the $K$ sets $X_i$, a
size function $s$, the target vector $B$ and the cardinality of the
target vector $m$.

\begin{proof}
  The \nmtsk{} problem is in NP since given a candidate partition
  $A_i$, we only need to verify that, $\sum_{a \in A_i}s(a) = b_i$,
  for $1 \le i \le m$. We now construct an instance
  \nmtsk{}$(K, X_i, s', B', m')$ of \nmtsk{} problem from an instance
  \nmts{}$(X, Y, s, B, m)$ of the \nmts{} problem using the following
  transformation for $K \geq 3$ since for $K=2$, the \nmtsk{} problem
  is the \nmts{} problem. Note that this is a polynomial
  transformation since computing Equations \ref{eq:nmtsk-xi},
  \ref{eq:nmtsk-size} and \ref{eq:nmtsk-bi} can be done in polynomial
  time.

  \begin{gather}
    m' = m,\; X_1 = X,\; X_2 = Y\\
    X_i \text{ are disjoint sets such that } |X_i| = m' \text{ for } 3
    \leq i \leq K\label{eq:nmtsk-xi}\\
    s'(x)=\begin{cases}s(x) & x \in X \cup Y\\
      1 & \text{otherwise}\\
      \end{cases}\label{eq:nmtsk-size}\\
    B'=(b'_1, b'_2, \dots, b'_m) \text{ where }  b'_i=b_i+ K-2, \text{
    } \forall b_i \in B\label{eq:nmtsk-bi}
  \end{gather}

  We now prove that a YES instance of the \nmtsk{} problem occurs iff
  a YES instance of \nmts{} occurs. Every partition for the \nmts{}
  problem is associated with a partition for the \nmtsk{} problem. We
  denote the elements of $X_i \text{ for } i \geq 3$ as
  $x_{ij}, 1 \leq j \leq m$ and let $A_i$ be the partition for the
  \nmts{} problem. The associated partition for the \nmtsk{} problem $A'_i$, is defined as follows:
  $A'_i=A_i \cup \{x_{ji}|3 \leq j \leq K\}$. This association
  immediately provides us with the equality: $\sum_{x \in A'_i} s'(x)
  = (\sum_{x \in A_i} s(x)) + (K-2)$ which we compare with the
  relation $b'_i=b_i+(K-2)$ from Eq.~\ref{eq:nmtsk-bi}. We get that
  $\sum_{x \in A'_i} s'(x)=b'_i$ and $\sum_{x \in A_i} s(x)=b_i$ either
  happen simultaneously or not at all. Thus, this association ensures
  that this is a valid transformation.

  The maximum number in the constructed instance is either the maximum
  size from $X$ and $Y$ or $K-2$ added to the maximum number from the
  target $B$; both of which are polynomially bounded in the maximum
  integer in, and the length of, the \nmts{} instance. This, along with  Remark~\ref{remark-for-proof} proves that \nmtsk{} is \snpc{}.
\end{proof}

\subsubsection{\kpwt{} is \snpc{}}\label{sect:prove-kpwt}

The \kpwt{} problem is \snpc{} by reduction from the
\nmtsk{} problem. This problem can be regarded as a generalization
of the \textsc{3-Partition} problem where we are looking for a
partition into K-sets and there are multiple targets to be reached
instead of a single target.

\begin{problem}[\kpwt{}]
  Given a set $X$ with $|X|=Km$, $K\geq 2$, a size function $s: X \mapsto
  \mathbb{Z^{+}}$ and a target vector $B=(b_1,\dots,b_m) \;\in
  \mathbb{N}^m$ with positive integer entries, can $X$ be partitioned into $m$ disjoint sets
  $A_1,A_2,\dots,A_m$, each containing exactly $K$ elements, such that, $\sum_{a \in A_i}s(a) = b_i$, for $1 \le i \le m$?
\end{problem}
\noindent
\kpwt{}~$(K, X, s, B, m)$ is an instance of the \kpwt{} problem characterized by the set $X$, an integer $K$, a
size function $s$, the target vector $B$ and the cardinality of the
target vector $m$.

\begin{proof}
  The \kpwt{} problem is in NP since given a particular candidate partition
  $A_i$, we only need to verify that, $\sum_{a \in A_i}s(a) = b_i$,
  for $1 \le i \le m$. We now construct an instance \kpwt{}~$(K', X, s', B', m')$ of \kpwt{}
  problem from an instance \nmtsk{}~$(K, X_i, s, B, m)$ of
  the \nmtsk{} problem in the following manner.

  $M$ is polynomially bounded by the maximum integer in the \nmtsk{}
  instance.

  \begin{gather}
    M =KmM' \;\text{ where, } M'=\max\big(\{s(x_i) | x_i \in X_i\} \cup \{b_i|b_i \in
    B\}\big)\label{eq:kpwt-max}\\
    K'=K, \text{ } X= \bigcup X_i\label{eq:kpwt-sets}\\
    \text{For}\;  1 \leq i \leq K \text{ and } \forall x_j \in X_i: \; s'(x_j)=s(x_j)+M^i\label{eq:kpwt-size}\\
    B'=(b'_1, b'_2, \dots, b'_m) \text{ where } b'_j=b_j+\sigma, \; \forall b_j \in B \;
    \text{ and } \sigma =\sum_{i=1}^K M^{i}\label{eq:kpwt-target}
  \end{gather}

  The transformation is polynomial since equations \ref{eq:kpwt-max}
  to \ref{eq:kpwt-target} are polynomial-time computable. Now we show
  that a YES instance of the \kpwt{} problem occurs iff a YES instance
  of \nmtsk{} occurs. For ease of exposition, for the rest of the
  proof, we write all the numbers in \kpwt{}~$(K', X, s', B', m')$ in
  base $M$. We make three remarks, the first: $\sigma$ is a $K+1$ digit
  number with a 1 in all its digits except the rightmost or $0^{th}$
  digit (Eq.~\ref{eq:kpwt-target}). The second, that every number
  $s'(x)$ has a 1 as its $i^{th}$ digit, $s(x)$ in its rightmost
  digit\footnote{Follows from Eq.~\ref{eq:kpwt-size} and $M$ being much greater in magnitude than any number in
  the \nmtsk{} instance.} and 0 elsewhere. The third, a partition
$A_j$ for the \nmtsk{} instance ($\bigcup X_i$) is also a partition for the \kpwt{}
instance ($X$), irrespective of whether either of them solve the respective
problems or not. We'll prove first that if $A_j$ is a partition that
solves the \nmtsk{} problem, then it also solves the \kpwt{} problem.

Let $A_j$ be a partition that solves the \nmtsk{} problem. Using the
same partition and Eq.~\ref{eq:kpwt-size} and \ref{eq:kpwt-target}, we get that
$\sum_{x \in A_j} s'(x)=\sum_{x \in A_j} s(x) +
\sum_{i=1}^{K}M^i=\sum_{x \in A_j} s(x) + \sigma=b_j +
\sigma=b'_j$. This proves that if $A_j$ solves the \nmtsk{} problem, then it
also solves the \kpwt{} problem.

To prove the converse, let $A_j$ be a partition that solves the \kpwt{}
problem. We know that $\sum_{x \in A_j} s'(x) = b_j + \sigma$, which
implies that $\sum_{x \in A_j} s(x) + \sum_i\sum_{x\in{}X_i\cap{}A_j}M^i = b_j +
\sigma$ (from Eq.~\ref{eq:kpwt-size}). This in turn implies that
$\sum_{x \in A_j} s(x) = b_j$ and $\sum_i\sum_{x\in{}X_i\cap{}A_j}M^i = \sigma = \sum_{i=1}^{K}M^i$ since $s(x)$
does not contribute to $\sigma$ (from the first two remarks and
Eq.~\ref{eq:kpwt-target}). Given $\sum_i\sum_{x\in{}X_i\cap{}A_j}M^i
= \sum_{i=1}^{K}M^i$, equating the coefficients of the powers of $M$,
we get that $|X_i\cap{}A_j|=1 \text{, } \forall i,j$ which says that
every set in the partition contains exactly one element from each of
the sets $X_i$. We already
know from the earlier equations that $\sum_{x \in A_j} s(x) = b_j$. Thus, the partition $A_j$ is a solution to the \nmtsk{} problem as well.

The maximum integer in the \kpwt{} instance created by the
transformation, $\sigma + \max(b_1,\dots,b_m)$, is bounded (from
Eq. \ref{eq:kpwt-max}) by a polynomial in the maximum integer in, and
the length of, the \nmts{} instance, which by Remark~\ref{remark-for-proof} makes this is
a \pptrans{}. Thus, \kpwt{} is \snpc{}.
\end{proof}

\subsubsection{\sssk{} is \snpc{}}\label{sect:prove-sssk}

We prove that the \sssk{} problem is \snpc{} by reduction from the
\kpwt{} problem. We use a subclass of the \kpwt{} problem, $\Pi_p$, such that
$\Max{I} \leq p(\Length{I}),\; \forall I \in \Pi_p$. By definition
\ref{def:snpc}, $\Pi_p$ is \npc{}. The \sssk{} problem:

\begin{problem}[\sssk{}]
  Given a sequence $\I_\S$ does there exist a $k$-ary tree $\T$, such
  that $\I_\S=\I_\S(\T)$?
\end{problem}

The \sssk{} problem is characterized by the set $S$ and the integer $k$. We refer to such an
instance as \sssk{}$(S, k)$.

\begin{proof}
  It is easy to see that the \sssk{} problem is in NP. Given $\T$, one
  only needs to compare $\I_\S$ with $\I_\S(T)$ to see if the given
  tree realizes that sequence. We now construct an instance
  \sssk{}$(S, k)$ from an instance of the \kpwt{} problem,
  \kpwt{}~$(K, X, s, B, m)$, with $k=K$.

  We define a number $M$ which is a power of K, is much greater in
  magnitude than any of the other numbers in the problem, and is
  polynomially bounded by the maximum integer in the \kpwt{}
  instance\footnote{$K^{\lceil\log_{K}\alpha}\rceil \leq
    K^{1+\log_{K}\alpha} = K\alpha. \therefore$ it is polynomially bounded.} (Eq.~\ref{eq:sssk-max} and \ref{eq:sssk-M}). We also define $m'$ and $m''$ such
  that $m+m'$ and $m+m''$ are powers of $K$.(Eq.~\ref{eq:sssk-m'm''}).
  \begin{gather}
    M_1=\max\big(\{s(x_i) | x_i \in X\} \cup \{b_i|b_i \in B\}\big)\label{eq:sssk-max}\\
    M=K^{\lceil\log_k M_2 \rceil}, \text{ where } M_2 =KmM_1\label{eq:sssk-M}\\
    m'=K^d-Km,\; m''=K^{d-1}-m=m'/K \text{, where }
    d=\lceil\log_K(Km)\rceil\label{eq:sssk-m'm''}
  \end{gather}
  
  We make the sequence $S$ for the \sssk{} instance using four
  ``component'' sequences, namely the ``child component'' C, the
  ``parent component'' P, the ``grandparent component'' G and the
  ``descendant component'' D:
    \begin{gather}
    S=C \cup P \cup G \cup D    \text{ where,}\label{eq:sssk-S}\\
    C=C' \cup C'',\; C'=\big\{s(x)+M \big|\psc x \in X\big\},\; C''=\big\{\overbrace{M,\psc
    \dots,\psc M}^{m' \mathrm{times}}\big\}\label{eq:sssk-C}\\
    P=P'\cup P'',\; P'=\big\{b_i+KM+1 \big|\psc b_i \in B \big\},\; P''=\big\{\overbrace{KM+1,\psc
    \dots,\psc KM+1}^{m'' \mathrm{times}}\big\}\label{eq:sssk-P}\\
    G=\bigcup_{i=0}^{d-2}l_i, \text{ where } l_i \text{ are ``levels''
      defined later in the text.} \label{eq:sssk-G}\\
    D=\bigcup_{i=1}^{K^d}D_i, \text{ where } D_i=\bigcup \I_\S(T^c(c_i)),\;
    \forall c_i\in C\label{eq:sssk-D}
  \end{gather}

  The ``child component'' $C$ is the union of the sets $C'$ and $C''$.
  $C'$ is in one-to-one correspondence with the set $X$, using the
  sizes of elements from $X$ with $M$ added to them. The set $C''$ is
  used to make the cardinality of set $C$ to be a power of $K$ using
  elements of value $M$. The ``parent component'' $P$ is the union of
  the sets $P'$ and $P''$. $P'$ is in one-to-one-correspondence with
  $B$ but has been modified to accommodate the changes made to sizes
  of elements of $X$ while making $C'$. $P''$ is used to make the
  cardinality of the set $P$ to be a power of $K$ using elements of
  value $KM+1$.
  
  We construct the ``grandparent component'' in ``levels''. The lowest
  level $l_{d-2}$ is constructed from $P$, by arbitrarily taking
  blocks of $K$ elements, adding them all up and incrementing the
  result by one. Formally, we order the elements in $P$ arbitrarily as
  $P_1, P_2, \dots, P_{K^{d-1}}$ and then let
  $l_{d-2}=\{l_{d-2,i} \;|\; l_{d-2,i}=1+\sum_{j=1}^K P_{(i-1)K+j},
  \; 1 \leq i \leq K^{d-2} \}$. Other levels $l_{d-i}$ are constructed
  in a similar manner from levels $l_{d-i+1}$. This is continued until
  $l_0$ which has only one element\footnote{The ``parent component''
    was padded with elements until the number of elements became a
    power of $K$. Since at each level the number of elements gets
    reduced by exactly a factor of $K$, eventually exactly one element
    will remain.}. The element in $l_0$ would be the largest number in
  the final instance.

The ``descendant component'' is constructed by using $T^c(c_i)$, the
subtree sizes sequences of complete trees on each of the elements
$c_i \in C$ (Refer to definition \ref{def:complete}). That is, for
each such $c_i$, we make a complete k-ary tree on $c_i$ nodes and find
its subtree size sequence $\I_\S(T^c(c_i))$. Let these sequences be
labeled $D_i$. The descendant component is $D=
\bigcup_{i}D_i=\bigcup_i \I_\S(T^c(c_i))$.

This is a polynomial transformation since each element from the
\nmtsk{} instance is being used only once and each time a simple
addition is done to get $P$ and $C$. There are a logarithmic number
($O(d)$) of $l \in G$ and each is computed in polynomial time. $D$ is
made up of $\I_\S(T^c(c_i))$ for each $c_i \in C$ which has a
polynomial number of elements and computing this for each element can be done in
polynomial time\footnote{We are reducing from $\Pi_p$, which ensures
  that the elements in the $C$ are polynomially bounded by
  $\Max{I}\; \forall I \in \Pi_p$ and thus, the subtree sizes
  sequences on complete trees on these number of nodes can be computed
  polynomially.}. Thus, every component can be computed in polynomial
time and so the whole \sssk{} instance can be constructed in
polynomial time.

Now we show that a YES instance of the \nmtsk{} problem occurs iff a
YES instance of \sssk{} occurs. By construction, all elements in $G$
and $P$, together, will form a k-ary tree with the elements in $P$ as
the leaves. Also $C$ and $D$ will make a forest of k-ary trees with
each element from $C$ as a root of one of the trees in the
forest. Since $C'$ is in one-to-one correspondence with $X$ and $P'$
is in one-to-one correspondence with $B$, if there is a partition,
$C'$ gets partitioned accordingly and these become the children of
elements of $P'$. $C''$ can be arbitrarily partitioned and made the
children of elements of $P''$. This will provide the remaining edges
to make a tree.

Now we need to prove that if there is a tree, then there is also a
partition. For this, we only need to prove that the set of children of
$P'$ is equal to $C'$. To prove this, it is sufficient to show that
the elements of $P'$ and the elements of $C'$ occur in consecutive
levels in any tree. We use equations \ref{eq:sssk-C} to
\ref{eq:sssk-D} to prove this. The element in $l_0$ is the largest
element and will necessarily have to be the root. This will be
followed by the elements from $l_1$ since no other elements are large
enough to reach the element in $l_0$. Continuing this argument it is
clear that the $l_i \in G$ will always appear in consecutive levels in
any tree and that $P$ will follow immediately below these levels. Now,
since no element from $p \in P$ will be a child of any $p' \in P$,
elements from either $C$ or $D$ will be needed to make child nodes of
elements in $P$. But we note that elements in $D$ will all be less
than $M/K$ in value which will not be enough to reach elements in $P$
thus necessitating that all children of elements of $P$ come from
$C$. We note that elements from $C'$ can not be children of elements
from $P''$ and so the set of children of $P''$ will be equal to
$C''$. Since a value of the order of $KM$ has to be reached for
elements in $P'$ and all elements in $C'$ are of the order of $M$, all
the elements from $C'$ will be used. Thus, if there is a tree, then it
will have the elements from $P'$ and $C'$ in consecutive levels and
therefore have a partition.

This transformation is sufficient to prove that \sssk{} is \npc{}. For
it to be \snpc{}, a subproblem of \sssk{} in which the maximum integer
is bounded by the length of the instance has to be proven \npc{}. We
note that the maximum integer in this reduction is the element in
$l_0$. We have already argued how this integer is polynomially bounded
by the length of the problem. This in turn proves that this reduction
is also sufficient to prove that \sssk{} is \snpc{}.
\end{proof}

\begin{corollary}[\sssk{} for full k-ary trees]\label{coro:full-sssk}
The \sssk{} problem for full k-ary trees (existence of full k-ary
trees given $\I_\S$) is also \snpc.
\end{corollary}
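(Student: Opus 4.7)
The plan is to adapt the reduction from \kpwt{} to \sssk{} in Section~\ref{sect:prove-sssk} so that every integer appearing in the constructed sequence $\I_\S$ is congruent to $1$ modulo $K$. The key observation is: if every subtree size of a $K$-ary tree is $\equiv 1 \pmod K$, then an internal node with $j$ children has size $\equiv 1 + j \pmod K$, and combined with $1 \leq j \leq K$ this forces $j = K$. Hence any $K$-ary realization of such a sequence is automatically full, so a reduction that only produces sequences with this residue property will prove the full-tree variant \snpc{} with essentially no extra work.

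To enforce the residue everywhere in the construction, I would first preprocess the \kpwt{} instance by multiplying every size $s(x)$ and every target $b_i$ by $K$, which is trivially solution-preserving. I would then pick $M$ to be any integer that is both sufficiently large (retaining the magnitude-separation role it plays in the original proof) and satisfies $M \equiv 1 \pmod K$; this inflates magnitudes by only a constant factor and preserves polynomial boundedness. With these choices, $C' = \{Ks(x)+M\}$, $C'' = \{M,\dots,M\}$, $P' = \{Kb_i+KM+1\}$, and $P'' = \{KM+1,\dots,KM+1\}$ all lie in the residue class $1 \pmod K$, and the grandparent levels $l_i$, each built by summing $K$ elements and adding $1$, inherit that residue. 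For the descendant component I would replace each complete subtree $T^c(c_i)$ by an arbitrary full $K$-ary tree on $c_i$ nodes; such a tree exists precisely because every $c_i \in C$ is $\equiv 1 \pmod K$, and every positive integer of the form $KI+1$ is realizable as a full $K$-ary tree (by induction on $I$, splitting into $K$ children of smaller valid sizes).

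Correctness of the YES-iff-YES equivalence is then inherited from the proof of the main theorem: the magnitude-separation arguments force $G$, $P$, $C$, and $D$ into consecutive levels and extract the \kpwt{} partition from the children of $P'$, while conversely a \kpwt{} solution yields the explicit realizing tree described above. The only difference from the original argument is that the realizing tree we construct is full by design, and any $K$-ary realization whatsoever is forced to be full by the congruence argument in the first paragraph. The main obstacle is ensuring the polynomial bounds survive the scaling by $K$ and the residue adjustment of $M$, but since each step only multiplies magnitudes by a factor polynomial in $K$ and the maximum integer of the source instance, the transformation remains a \pptrans{} and the corollary follows by Remark~\ref{remark-for-proof}.
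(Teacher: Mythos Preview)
Your overall strategy coincides with the paper's: both modify the reduction of Section~\ref{sect:prove-sssk} so that every entry of the constructed $\I_\S$ lies in the residue class $1 \pmod K$ (the paper phrases this as $c \mapsto Kc+1$ on $C$, with ``similar changes to $P$''; you instead scale the \kpwt{} data by $K$ and shift $M$). Your explicit residue argument---that once every subtree size is $\equiv 1 \pmod K$, any $K$-ary realization is automatically full---is a clean observation the paper does not spell out, and it makes transparent why the full-tree and unrestricted versions of $\R(\I_\S)$ coincide on the instances produced.

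There is, however, one real gap. You propose to build $D$ from an \emph{arbitrary} full $K$-ary tree on each $c_i$, and the only such tree you actually exhibit is the degenerate one ($K-1$ leaves plus one child of size $c_i-K$, applied recursively). With that choice $D$ contains the values $c_i-K,\,c_i-2K,\ldots$, all of order $M$, so the magnitude-separation step ``elements of $D$ are too small to be children of $P$'' no longer holds: a large $D$-entry can stand in for a $C'$-entry beneath some $p\in P'$, and the backward direction (realizing tree $\Rightarrow$ \kpwt{} partition) breaks down. The fix is immediate and is exactly what the paper uses: once $c_i\equiv 1\pmod K$, the complete tree $T^c(c_i)$ is \emph{already} full (a complete $K$-ary tree on $n$ nodes has a node with strictly between $1$ and $K-1$ children iff $n\not\equiv 1\pmod K$), so there is no need to replace anything in $D$. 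Keeping the complete trees preserves the bound on non-root sizes in $D$ that the separation argument of Section~\ref{sect:prove-sssk} relies on, and with that correction the rest of your proof goes through.
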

In the proof (Ref. Section \ref{sect:prove-sssk}) of the \sssk{}
problem for k-ary trees, we are reducing (when it is a YES instance)
the \kpwt{} instance to a full k-ary tree, except possibly in $D$. $D$ is being made of complete trees on elements of $C$. A simple
inductive proof is enough to show that changing every element
$c \in C$ to the form $Kc + 1$ would be enough to make all of the
complete trees in $D$ to also be full trees. This (along with similar
changes to $P$) would allow the same reduction to be used to prove
\sssk{} to be \snpc{} for full k-ary trees as well.


\subsection{Sub-classes realizable in polynomial time}\label{sect:sssk-positive}
While the existence problems for a k-ary tree and for the full k-ary
are \snpc{}, some sub-classes can be realized in polynomial time.

\begin{enumerate}
\item Complete Trees: The complete tree $\T^{c}(n)$ on a given number
  of nodes $n$ has a unique structure. Given a sequence $\I_\S$, we
  construct $\T^{c}(|\I_\S|)$ and simply check if
  $\I_\S = \I_\S(\T^{c})$.
\item Degenerate Trees (A tree which is just
    a path): $\I_\S$ must contain exactly one instance of every
  number from 1 to $|\I_\S|$.
\end{enumerate}

These along with corollary \ref{coro:full-sssk} show that the
structure of the tree we are trying to realize plays an important role
in deciding the complexity of the problem. The full k-ary tree which
allows flexibility in terms of the structure of the tree is \snpc,
while for the more rigidly structured complete and degenerate trees,
the problem is trivial.

\section{Height and Depth}\label{sect:HDstuff}

Depths, heights and subtree sizes can be recursively defined on the
basis of the attribute values of some neighbor nodes and given a tree,
each of the sequences $\I_\S$, $\I_\H$, $\I_\D$ can be computed in
linear time. We have seen that realizing trees given the subtree sizes
sequence is NP-Complete but when we consider sequences of other
attributes, we see that they are much easier to solve. We now provide
$O(n\log(n))$ algorithms for determining the existence of trees given
height or depth sequences.

\subsection{Depth}\label{sect:depth}
We know that in $K$-ary trees, every node can have at most $K$
children. And by definition, depth values give the level at which a
node is present. Let us define $C_d$ to represent the number of times
the value $d$ occurs in the sequence $\I_\D$. If we have built a k-ary
tree down to $d$ levels then the next level can accommodate at most
$KC_d$ nodes. Hence, if for every $d$ from $0$ to $d_{max}-1$, the
condition $C_{d+1} \le KC_d$ holds, then a tree can be constructed. If
the condition fails at any point then we get a proof that no tree
exists.

\begin{gather}\label{eq:depth-condition}
  C_{d+1} \leq KC_d, \text{ for } 0 \leq d \leq d_{max}-1
\end{gather}

\subparagraph{Sub-classes}
\begin{itemize}
\item Complete tree: Eq.~\ref{eq:depth-condition} is modified to
  be $C_{d+1}=KC_d$  for all $d \neq d_{max}-1$.
\item Full $K$-ary tree: Along with the condition in
  Eq.~\ref{eq:depth-condition} we add the extra condition that $K$
  divides $C_d$ for all $d$. That is, $K|C_d \quad \forall d$.
\item Degenerate Tree (A path): Exactly one count of each value must be
  present.
\end{itemize}

It is easy to see that these will all result in single pass algorithms
once we have sorted the sequence and thus their running time is of the
order of $O(n\log(n))$ where $n$ is the number of elements in the
input sequence.

\subsection{Height}\label{sect:height}
Given a height sequence $\I_\H$, to solve the existence problem, we use the
fact that, from the definition of height of a node, if a particular value $h$ exists in the sequence then so would
at least one instance of each value less than that ($h-1, h-2, \dots,
2, 1$).

\begin{definition}[Strand] The strand of a node is a
  maximal path from the node to a leaf.
   \end{definition}

   We divide the given sequence $\I_\H$ into maximal strands by
   choosing greedily from the given sequence. We ensure that the
   maximal length strands are made before we make any of the smaller
   strands. So, first the root (the largest $h\in\I_\H$) will get a
   strand of its own. Then, the next biggest remaining height value
   will get a strand and so on until no elements remain in the
   sequence. This division is going to be unique because for a node to
   get a height value of $h$, it has to have have at least one child
   whose height is $h-1$, which in turn will need $h-2$ and so
   on. Thus, it gives us a necessary condition for the existence of a
   tree: if while making these strands, we get stuck at some point
   then no tree can be constructed.

   Once these strands are constructed, then we connect them together
   to get a tree, if possible. Any strand with its root's height value
   as $h$ can only be attached as a child of a node whose height value
   is at least $h+1$. Thus if we just check if there are enough places
   where strands can be joined then we could answer whether a tree
   exists. This translates into a simple polynomial time algorithm
   stated in the following pseudo-code:
\begin{verbatim}
places:=0
count[i]:= number of nodes with height i
for i from h_max to 0:
    places := places + K*count[i] - count[i-1]
    if places < 0:
        No tree exists
Tree exists
\end{verbatim}
The \verb|places| variable tells us how many places are left at higher
levels after each strand is added to the tree. If at some point,
\verb|places| becomes less than zero then the last strand we added was
an invalid addition. Hence there would be no tree possible. If it is
non-negative throughout then a tree would be possible.

\subparagraph{Sub-classes}

\begin{itemize}
\item 
Complete $K$-ary: Compare if $\I_\H = \I_\H(T^{c})$, where $T^{c}$ is
the complete tree on $|\I_\H|$ nodes.
\item Full $K$-ary tree: If \verb|places| is zero at the end of the loop,
  then a full $K$-ary tree is possible.
\item Degenerate Tree: Exactly one count of each height value must be
  present.
\end{itemize}

\section{Combining sequences}\label{sect:combined}

In this section we look at the problem of existence given sequences in
combination.

\subsection{Synchronized Height and Depth: $\R(\I_{\H\D})$}
The algorithm for solving the $\R(\I_{\H\D})$ problem combines ideas
from the methods to solve the depth and height problems (Sections
\ref{sect:depth} and \ref{sect:height} respectively) and is easy to
verify.

\begin{enumerate}
\item While finding maximal strands, ensure that depth values are also
  assigned in order.
\item Before computing \verb|places|, put the roots of the strands at
  the correct depth.
\item At every level, check if dedicated possible parent exists for
  each strand rooted at that level. Do this in descending order of the
  root's height values.
\item If one gets stuck at any point in the algorithm, then no tree
  exists, otherwise, a tree exists.
\end{enumerate}

\subparagraph{Asynchronized Height and Depth: $\R(\I_\H,\I_\D)$}
We do not know any solution to the asynchronized version of the height
and depth sequences nor do we have a proof for \npc{}ness.

We have seen that realization given height and depth sequences be
accomplished in polynomial time. We now attempt to solve the existence
problem given subtree sizes by additionally providing depth and/or
height sequences. Note that the existence of complete and degenerate
trees can still be polynomially answered given any combination of
depth and height sequences along with subtree sizes sequence (Section
\ref{sect:sssk-positive}, \ref{sect:depth} and
\ref{sect:height}). 

\begin{corollary}\label{coro:SD}
$\R(\I_\S,\I_\D)$ and $\R(\I_{\S\D})$ for k-ary as well as full k-ary
trees are \snpc{}.
\end{corollary}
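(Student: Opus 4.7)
The plan is to reduce from \sssk{} (shown \snpc{} in Section~\ref{sect:prove-sssk}) to each of the four variants, by reusing the \kpwt{}$\to$\sssk{} construction of that section and augmenting its output with a consistent depth multiset.

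First I would observe that the depth of every node in the \sssk{} instance built there is essentially forced by the structure. The proof in Section~\ref{sect:prove-sssk} already argues that the element of $l_0$ must be the root, the elements of $l_i$ must sit at depth $i$, the elements of $P$ at depth $d-1$, the elements of $C$ at depth $d$, and the elements of $D$ must lie in the subtrees rooted at $C$ (with their positions inside those subtrees fixed by the complete-tree structure of $T^c(c_i)$). I would therefore define $\I_\D$ as the multiset of depths in the ``standard'' tree built in the forward direction of that proof: depth $i$ for each element of $l_i$, depth $d-1$ for each element of $P$, depth $d$ for each element of $C$, and for each element of $D$ the sum of $d$ with its depth inside the appropriate $T^c(c_i)$. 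For $\R(\I_{\S\D})$, I would pair each subtree size with its depth according to the same standard tree. For the full k-ary variants, I would first apply the modification from Corollary~\ref{coro:full-sssk} so the subtrees below $C$ become full complete trees, and then compute the depth data the same way.

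Correctness would follow directly from the existing proof. The forward direction is immediate because $\I_\D$ is constructed from the very tree that realizes $\I_\S$ in the YES case of \kpwt{}. The reverse direction is equally easy: any tree realizing $(\I_\S,\I_\D)$ or $\I_{\S\D}$ in particular realizes $\I_\S$, and Section~\ref{sect:prove-sssk} then shows that this forces the structure from which a \kpwt{} solution can be read off. Membership in NP is clear, since a candidate tree can be verified in linear time.

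The only nontrivial check, and the main obstacle I anticipate, is the pseudo-polynomial bound. The subtree sizes are unchanged, so $\Max{I}$ of that part is still dominated by the element of $l_0$, which was already shown polynomially bounded in the \kpwt{} instance. The newly added depth values are at most the height of the tree, which is $O(d) = O(\log_K(Km))$ for the ``top'' portion plus $O(\log_K |\I_\S|)$ coming from the complete subtrees; this is logarithmic in the length of the instance and certainly polynomially bounded. Hence the augmented transformation remains a \pptrans{}, and by Remark~\ref{remark-for-proof} each of the four problems in Corollary~\ref{coro:SD} is \snpc{}.
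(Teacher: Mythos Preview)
Your proposal is correct and follows essentially the same approach as the paper: augment the \kpwt{}$\to$\sssk{} construction of Section~\ref{sect:prove-sssk} with the depth information that is already implicit in that construction (depth $i$ for $l_i$, depth $d-1$ for $P$, depth $d$ for $C$, and the complete-tree depths for $D$), and invoke Corollary~\ref{coro:full-sssk} for the full $k$-ary variants. Your write-up is in fact more explicit than the paper's about NP membership and the pseudo-polynomial bound on the added depth values, but the underlying argument is the same.
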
\begin{proof}
  We show that the $\R(\I_\S)$ instance created during the reduction
  in Section~\ref{sect:prove-sssk} implicitly constructs instances of
  the $\R(\I_\S,\I_\D)$ and $\R(\I_{\S\D})$ problems.

  We note that the depth of the levels in $G$, are
  equal to the subscripts (0 to $d-2$) that are used. For $P$ and $C$
  the depths are $d-1$ and $d$, respectively. The depths for the complete
  trees created in $D$ can be computed along with the subtree
  sizes. Adding this information along with the subtree sizes
  information during reduction allows one to prove the
  $\R(\I_\S,\I_\D)$ and $\R(\I_{\S\D})$ problems to be
  \snpc{}. Similar modifications to corollary~\ref{coro:full-sssk}
  prove the \snpc{}ness for full k-ary trees as well.
\end{proof}

\begin{corollary}\label{coro:HS}
$\R(\I_\S,\I_\H)$ and $\R(\I_{\S\H})$ for k-ary as well as full k-ary
trees are \snpc{}.
\end{corollary}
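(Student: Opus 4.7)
The plan is to mirror Corollary \ref{coro:SD}: augment the reduction of Section \ref{sect:prove-sssk} with a height sequence $\I_\H$ (or the synchronized $\I_{\S\H}$) that can be computed from the \sssk{} instance alone, and then reuse the original argument for the reverse direction. Starting from a \kpwt{} instance, I would build the same $\I_\S$ as in Section \ref{sect:prove-sssk} and additionally record the heights of every node in the tree that the original proof assembles from a valid partition.

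The main obstacle is showing that this height sequence is well-defined, i.e.\ depends only on the input \kpwt{} instance and not on the particular partition used. The rigid skeleton established in Section \ref{sect:prove-sssk} already fixes the depth of every node in $G \cup P \cup C$; the only remaining freedom is which element of $C'$ becomes a child of which element of $P'$. I would close this gap at the height level by tuning $M$---which the original proof only requires to be a sufficiently large power of $K$---so that every $c \in C$, whether of the form $M$ or $M + s(x)$, corresponds to a complete k-ary tree of the same height $t$. Since $M$ is polynomially much larger than the largest original size $M_1$, a routine check shows that both $M$ and $M + M_1$ can be placed inside the same height-bracket of the complete k-ary tree count function, so all elements of $C$ share the same height.

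Once that tuning is in place, every height is forced by position: each $c \in C$ has height $t$, each $p \in P$ has height $t+1$, each $l_i \in G$ has height $t + d - i$, and the heights inside the $D$-subtrees depend only on the multiset $\{c : c \in C\}$, which is fixed by the instance. Taking $\I_\H$ (and $\I_{\S\H}$) to be this forced multiset, the forward direction is immediate since the tree built from a valid partition realizes both sequences by construction, and the reverse direction is immediate because a tree realizing $\I_\S$ and $\I_\H$ in particular realizes $\I_\S$, whence the original reduction hands back a valid \kpwt{} partition. The largest newly introduced integer is a height of order $\log_K$ of the instance length, so the augmented transformation remains a \pptrans{}. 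The full k-ary variant goes through exactly as in Corollary \ref{coro:full-sssk}: substituting $c \mapsto Kc + 1$ makes the $D$-subtrees full, after which the same height bookkeeping applies verbatim.
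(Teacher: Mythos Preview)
Your proposal is correct and follows essentially the same line as the paper's own proof: the key observation in both is that every element $c\in C$ lies in a single height-bracket of the complete $k$-ary tree (the paper phrases this as $K^{h}\le c<K^{h+1}$ with $h=\lceil\log_K M_2\rceil$), so all the $D$-subtrees have identical height, which in turn pins down the heights of $P$ and $G$ and makes $\I_\H$ (and $\I_{\S\H}$) computable from the \kpwt{} instance alone. The only cosmetic difference is that you speak of ``tuning $M$'' to land in a single bracket, whereas the paper simply observes that the $M$ already chosen in Section~\ref{sect:prove-sssk} does the job.
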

\begin{proof}
  We show that the $\R(\I_\S)$ instance created during the reduction
  in Section~\ref{sect:prove-sssk} implicitly constructs instances of
  the $\R(\I_\S,\I_\H)$ and $\R(\I_{\S\H})$ problems.

  To know the height of a node, the heights of the children need to be
  known. Due to the way in which the child component is constructed
  (Eq.~\ref{eq:sssk-C}), a child component element $c$ will always
  satisfy the following inequality $K^h \leq c < K^{h+1}$ where
  $h=\lceil\log(M_2)\rceil$. This along with the fact that the
  descendant component attaches a complete
  subtree to $c_i$ ensures that the heights of all $c_i$ will be fixed at
  $h$. The parent and grandparent components' heights are decided
  based on the heights of the $c_i$, hence the heights of all the
  nodes in $S$ can be computed during the transformation. Similarly
  with corollary~\ref{coro:full-sssk} for full k-ary versions.
\end{proof}

\begin{corollary}\label{coro:HDS}
Both the synchronized and asynchronized sequences of heights, depths
and subtree sizes are NP-Complete. (Obvious from Corollaries
\ref{coro:SD} and \ref{coro:HS}.)
\end{corollary}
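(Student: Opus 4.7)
The plan is to fuse the two augmented reductions from Corollaries \ref{coro:SD} and \ref{coro:HS} into a single reduction that emits heights, depths, and subtree sizes simultaneously. Both of those corollaries operate on the same \kpwt{}-to-\sssk{} transformation built in Section \ref{sect:prove-sssk}: in that transformation the grandparent levels $l_i \in G$ sit at fixed depths $i$, the parent component $P$ sits at depth $d-1$, the child component $C$ sits at depth $d$, and each $c_i \in C$ hangs a complete $k$-ary subtree whose internal depths and heights are determined by $c_i$ alone. So every element of $\I_\S$ already has a predictable depth \emph{and} a predictable height, and both can be attached to it during the reduction in polynomial time.

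I would therefore enlarge the output of the Section \ref{sect:prove-sssk} reduction so that each node emits a triple $(s,h,d)$, giving the synchronized instance $\I_{\S\H\D}$ directly and the asynchronized instance $(\I_\S,\I_\H,\I_\D)$ by coordinatewise multiset projection. For the forward (YES) direction, the tree that Section \ref{sect:prove-sssk} constructs from a satisfying \kpwt{} partition already realizes all three sequences by design, and realizes the tuples in the synchronized version as well. For the backward direction, any tree realizing either the synchronized or the asynchronized information in particular realizes $\I_\S$, and the rigidity argument of Section \ref{sect:prove-sssk}, which used only subtree sizes to force the layering, already recovers a \kpwt{} partition.

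The \pptrans{} bounds on $\Length{I}$ and $\Max{I}$ from Section \ref{sect:prove-sssk} are untouched by this augmentation, since depths are bounded by $d=\lceil \log_K(Km)\rceil$ and heights by roughly $d + \lceil \log_K M_2\rceil$, both polynomially bounded in the original \kpwt{} instance. The full $k$-ary versions follow by composing with the $Kc+1$ substitution from Corollary \ref{coro:full-sssk}. No step looks genuinely hard; the main thing I would need to be careful about is the bookkeeping of the depth and height of every node inside the complete subtrees making up $D$, and confirming that these values are determined solely by the root size $c_i$ because $T^c(c_i)$ is the unique complete tree on $c_i$ nodes.
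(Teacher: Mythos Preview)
Your proposal is correct and is exactly the argument the paper has in mind: the paper gives no separate proof here beyond the parenthetical ``Obvious from Corollaries~\ref{coro:SD} and~\ref{coro:HS}'', and what you have written is precisely the spelled-out fusion of those two augmentations of the Section~\ref{sect:prove-sssk} reduction. Your forward/backward directions and the \pptrans{} bound check are all in line with the paper's implicit reasoning.
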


Since the height and depth sequences are not enough to provide a
solutions, we look at providing the inorder traversal rank (the position of a
node during inorder traversal) synchronized with subtree sizes; we
denote this as $\I_{\S,\ITR}$ and the problem as
$\R(\I_{\S,\ITR})$. While this can be extended to k-ary trees, it is
most easily illustrated in binary trees where the inorder traversal
rank is the rank of the node as it would have been in a binary search
tree. Since we know the root, it allows for trivial partitioning of
the remaining elements into either the left or the right
subtree. After that step, one is left with two smaller problems. This
would allow one to use a divide-and-conquer method to get a polynomial
time solution; giving us the following corollary:

\begin{corollary}\label{coro:itr}
  $\R(\I_{\S,\ITR})$ can be solved in polynomial time.
\end{corollary}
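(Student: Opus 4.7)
The plan is a divide-and-conquer algorithm driven by the fact that inorder traversal lists the left subtree, then the root, then the right subtree. I would first identify the root as the unique tuple whose subtree size equals $n = |\I_{\S,\ITR}|$; if no such unique tuple exists, report NO. Let $r$ be the ITR of this root. I then partition the remaining $n-1$ tuples into $L = \{(s,t) \in \I_{\S,\ITR} : t < r\}$ and $R = \{(s,t) \in \I_{\S,\ITR} : t > r\}$, verify that the cardinalities are consistent with $r$ being the root's inorder position, and check that $L$ (if non-empty) contains a tuple of subtree size $|L|$ and $R$ (if non-empty) contains a tuple of subtree size $|R|$, these being the candidate left and right children of the root. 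The sub-instances consisting of $L$ and $R$ are then solved recursively.

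Correctness would follow by strong induction on $n$. The root is forced by the subtree-size being $n$, and since inorder traversal places every node of the left subtree before the root and every node of the right subtree after it, any realizing binary tree must put exactly the ITR-less-than-$r$ tuples on the left and the ITR-greater-than-$r$ tuples on the right; hence the two recursive sub-instances faithfully describe the only possible realization. Conversely, any trees returned by the recursive calls attach under the identified root to yield a binary tree realizing $\I_{\S,\ITR}$. For the running time, each call performs $O(n)$ work to find the root and partition the tuples; sorting by ITR up front and using pointer-based splits gives $O(n \log n)$, and even a naive implementation is $O(n^2)$, well within polynomial.

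The only real obstacle is extending this to general $k$-ary trees for $k \geq 3$, where inorder traversal is not canonically defined. Once any convention is fixed — for instance, visit the first $\lceil k/2\rceil$ children, then the root, then the remaining children — the root's ITR still induces a split of the remaining tuples into contiguous ITR-blocks whose sizes are forced by the prospective children's subtree sizes, so the same divide-and-conquer carries through. Since the corollary only claims polynomial-time solvability and the paper frames $\I_{\S,\ITR}$ primarily in the binary case, the binary algorithm above suffices to establish the statement.
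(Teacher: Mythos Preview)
Your proposal is correct and follows essentially the same approach as the paper: identify the root by its subtree size, use the inorder traversal rank to split the remaining tuples into left and right sub-instances, and recurse via divide-and-conquer. The paper gives only the one-paragraph sketch preceding the corollary, so your write-up is in fact more detailed (with explicit correctness and running-time arguments) while remaining faithful to the intended argument.
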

  
The $\R(\I_{\S,\ITR})$ problem throws light into the structural
difficulty in the \sssk{} problem. In the \sssk{} problem, say for
binary trees, deciding the root node's subtree size is obvious (the
largest value in the sequence, say $r$). After that, the second
largest number, say $l$, is necessarily one of the children (let it be
the left child) and $r-l$ will be the remaining child of the root
node. So, the root and its two children can be easily decided (if
they exist) but after that, partitioning the remaining nodes into the
left or right subtree of the root is a difficult task. This is like
the NP-Complete \textsc{partition} problem but is more complex since not only
does the partition have to sum up to a particular value, the partition
must also be realizable as a tree.

Adding the inorder traversal ranks allows definitive partitioning of
the nodes into left/right subtrees. This is then no longer similar to
the \textsc{partition} problem, leading to a polynomial time
solution. This shows that the \textsc{partition} like nature of the
\sssk{} problem is an important part of what makes it \snpc{}. We also
note that whenever we are able to solve a variant of the $\R(\I_\S)$,
we are always creating a unique tree. As soon as there is structural
flexibility in the construction of the tree, the problem becomes
difficult.

\section{Conclusion}\label{sect:conclusion}

In this paper we look at the problem of the existence of rooted k-ary
trees given some combination of sequences of attributes like subtree
sizes, heights and depths. We prove the problem of the existence of
the tree given the subtree sizes sequence to be \snpc{}; problems that
additionally provide height and/or depth sequences, in either
synchronized or asynchronized manner still have the same
complexity. We also prove that in each of these cases, when asked
about the existence of full k-ary trees, the problem remains
\snpc{}. Existence of trees given $\I_\H$, $\I_\D$ and $\I_{\H,\D}$
can be solved in polynomial time, even the problem of existence of
full k-ary trees. For all of these problems existence of complete and
degenerate trees can be solved in polynomial time.

Apart from these, when the inorder traversal rank is given
synchronized with the subtree sizes sequence, the existence of a tree
can be answered in polynomial time. We argued that this is evidence
that the difficulty in the problem is in its partitioning like nature. We
also argued, by comparing the complexity of complete and degenerate
tree variants with the full tree variant, that the uncertainty or
freedom in the structure of a tree plays a role in the intractability
of the problem.

There are many areas for future work. For problems related to subtree
size sequences, exact exponential, approximation algorithms or other
such strategies can be
searched for, which allow solving it more efficiently. One could also look for minimal super-sequences or
maximal sub-sequences which realize a tree. The contrasting nature of
the subtree sizes, height and depth attributes can be studied to gain
a better understanding of the relation between them in terms of
realizability. Along the same lines, we have not been able to find
either an algorithm or a proof of \npc{}ness for the asynchronized
height and depth sequence problem. Studying this problem might throw
insight into how the height and depth of k-ary trees are related. Our
focus throughout the paper has been on rooted k-ary trees but all
these problems can also be asked for general rooted trees.

\subparagraph*{Acknowledgments.}  The author would like to thank
Professor Rahul Muthu and Professor Srikrishnan Divakaran for their
continued guidance and for their helpful suggestions. The author also
thanks Professor Jayanth Varma for his review of the draft.





\bibliography{final-lipics-arxiv}

\end{document}